\documentclass{article}

\usepackage{arxiv}

\usepackage[utf8]{inputenc} 
\usepackage[T1]{fontenc}    
\usepackage{hyperref}       
\usepackage{url}            
\usepackage{booktabs}       
\usepackage{amsfonts}       
\usepackage{nicefrac}       
\usepackage{microtype}      
\usepackage{amsmath}
\usepackage{amsthm}
\usepackage{amssymb}
\usepackage{dsfont }
\usepackage{bbm}
\usepackage{cleveref}       
\usepackage{lipsum}         
\usepackage{graphicx}
\usepackage{natbib}
\usepackage{doi}
\usepackage{enumitem}  
\usepackage{threeparttable}

\newtheorem{theorem}{Theorem}
\newtheorem{lemma}{Lemma}

\newtheorem{assumption}{Assumption}

\newcommand{\E}{\mathbb{E}}
\newcommand{\Var}{\mathrm{Var}}
\newcommand{\Cov}{\mathrm{Cov}}
\newcommand{\Prob}{\mathbb{P}}
\newcommand{\cudml}{\widehat{\theta}^{\text{CU-DML}} }
\newcommand{\norm}[1]{\left\lVert #1 \right\rVert}

\title{Calibrating doubly-robust estimators with unbalanced treatment assignment}

\date{May 20, 2024}

\author{ Daniele Ballinari\thanks{The views, opinions, findings, and conclusions or recommendations expressed in this paper are strictly those of the author(s). They do not necessarily reflect the views of the Swiss National Bank (SNB). The SNB takes no responsibility for any errors or omissions in, or for the correctness of, the information contained in this paper.}\thanks{For helpful comments and suggestions, we thank Nora Bearth, Michael Lechner, Thomas Maag, Alexander Wehrli, Michael Zimmert, and two anonymous reviewers.}\\
    Swiss National Bank\\
    Börsenstrasse 15\\
    8001 Zurich, Switzerland\\
    \texttt{daniele.ballinari@snb.ch} \\
}


\begin{document}
\maketitle

\begin{abstract}
Machine learning methods, particularly the double machine learning (DML) estimator \citep{Chernozhukov2018}, are increasingly popular for the estimation of the average treatment effect (ATE). However, datasets often exhibit unbalanced treatment assignments where only a few observations are treated, leading to unstable propensity score estimations. We propose a simple extension of the DML estimator which undersamples data for propensity score modeling and calibrates scores to match the original distribution. The paper provides theoretical results showing that the estimator retains the DML estimator's asymptotic properties. A simulation study illustrates the finite sample performance of the estimator.
\end{abstract}

\jelcodes{C14 \and C21 \and C52 \and C55}
\keywords{Causal machine learning \and Double machine learning \and Average treatment effect \and Unbalanced treatment assignment \and Undersampling}

\section{Introduction}
Estimation of the average treatment effect (ATE) is of central importance in empirical research. The interest generally lies in the effect that a binary treatment has on an outcome variable. For example, we might be interested in the effect that a training program has on the unemployment duration. With the increasing availability of large datasets, the use of machine learning (ML) methods to estimate the ATE has become popular \citep{Athey2019,Liuyi2021}. The double machine learning (DML) estimator \citep{Chernozhukov2018} is a widely adopted ATE estimator that relies on ML-estimated nuisance functions. The DML estimator has been shown to be consistent and asymptotically normal, even when using ML methods that converge at a slower rate than the parametric rate \citep{Chernozhukov2018}.

While dataset sizes have increased, the number of treated units often remains small.
The treatment is often costly and/or time-consuming. For example, the submission of a new drug to a patient might take several months, or a training program for the unemployed is very costly. Control outcomes and covariates, on the other hand, are more easily collected from, e.g., administrative agencies, medical records, or financial markets \citep{Kunzel2019,Bouchaud2022,Hujer2006}. The dataset might then consist of only a few treated, but many control observations. This unbalancedness can lead to unstable propensity score estimations \citep{Huber2013}.
Even though the DML estimator relies on a doubly robust approach that combines the conditional outcome expectations with the propensity score, instability in the propensity score estimation can lead to high variability in the ATE estimate.

In this paper, we propose a simple extension of the DML estimator that addresses the issue of unbalanced treatment assignment. Inspired by the ML classification literature \citep{Japkowicz2002}, the proposed approach undersamples the data used for fitting the ML model for the propensity score. Using the relation between the true and the undersampled propensity score, we calibrate the propensity scores to match the original data distribution. We show that the proposed estimator has the same asymptotic distribution as the DML estimator, attains the parametric rate of convergence $\sqrt{N}$ and its variance achieves the semi-parametric efficiency bound \citep{Hahn1998}. We illustrate the finite sample performance of the estimator in a simulation study. While we present the results in the context of the DML estimator, the proposed approach applies to any ATE estimator that relies on the efficient score function.

\section{Calibration estimator}
\subsection{Notation and causal identification}
We define causal effects using Rubin's \citeyear{Rubin1972} potential outcome framework. The interest lies in the effect of a binary treatment variable\footnote{The estimator presented in this paper can be directly generalized to multivalued treatments \citep[see, among others,][]{Farrell2015,Knaus2020}.} $D$ on an outcome variable $Y$. We denote the potential outcome for treatment $D=d$, that is the outcome one would observe if the treatment was $d$, as $Y^d$. The effect of interest is the average treatment effect (ATE) defined as:
\begin{equation}\label{eq:ate}
    \theta = \E[Y^1 - Y^0]
\end{equation}

The potential outcomes are not directly observed and the parameter of interest has to be identified from observational data. The researcher observes a sample of i.i.d. random variables $\{Z_1, \dots, Z_N\}$ where $Z_i := \left(X_i, D_i, Y_i\right)$, with $X_i$ being a $p$-dimensional vector of exogenous control variables with support $\mathcal{X}$, $D_i$ the binary treatment random variable, and $Y_i$ the observed outcome.  Define the conditional outcome expectations as $\mu_d(X) = \E[Y\vert D=d, X]$ and the propensity score as $p(X) = \E[D|X] = \Prob[D=1|X]$.
The respective estimated quantities are denoted as $\widehat{\mu}_d(X)$ and $\widehat{p}(X)$. Finally, we denote by $\tau(Z)$ the efficient score function:
\begin{equation*}
    \tau(Z) =  \mu_1(X) - \mu_0(X) + \frac{D}{p(X)}(Y-\mu_1(X)) - \frac{1-D}{1-p(X)}(Y-\mu_0(X)).
\end{equation*}

Identification of the ATE from observable outcomes is achieved under the following assumption.
\begin{assumption}[Identification]\label{assumption:identification} For observation $Z = \left(X, D, Y\right)$ assume that \citep{Rosenbaum1983}:
    \begin{enumerate}[label=(\roman*)]
        \item $Y^0, Y^1 \perp D \vert X = x $ for any $x \in \mathcal{X}$ (conditional independence assumption).
        \item The observed outcome is $Y = D Y^1 + (1-D) Y^0$ (stable unit treatment value assumption).
        \item For any $x \in \mathcal{X}$ it holds that $\eta < p(x) < 1-\eta$ for some $\eta>0$ (common support).
        \item $X = X^1 = X^0$ where $X^d$ denotes the random covariate vector under treatment $d$ (exogenity of covariates).
    \end{enumerate}
\end{assumption}
Under Assumption \ref{assumption:identification} the ATE can be characterized as a functional of the joint distribution of the observed data $(X,D,Y)$ \citep{Athey2019}:
\begin{align}\label{eq:dr}
    \theta = \E[Y^1 - Y^0] =  \E[\mu_1(X) - \mu_0(X)] = \E\left[ \tau(Z) \right].
\end{align}
 
\subsection{Double machine learning estimator}
The estimator proposed in this study extends the popular ATE estimator developed by \cite{Chernozhukov2018} generally referred to as the double machine learning (DML) estimator. DML builds on two key ingredients. First, it uses the efficient score function $\tau(Z)$ to construct an estimator of the ATE. Second, it uses cross-fitting to estimate the nuisance functions $\mu_d(X)$ and $p(X)$. In more detail, the DML estimator for the ATE is defined as follows:

\begin{enumerate}[label*=\textsc{Step} \arabic*]
    \item For some fixed $K \in \{2, \dots, N\}$, randomly partition the observation indices into $K$ sets $\mathcal{I}_1, \dots, \mathcal{I}_K$ of equal size. Denote the complement of $\mathcal{I}_k$ by $\mathcal{I}_{-k} = \{1, \dots, N\}\setminus\mathcal{I}_k$. Denote the cardinality of each set of indices by $\vert \mathcal{I}_k \vert$.
    \item \textbf{for} $k=1$ \textbf{to} $K$ \textbf{do}:\\[0.2cm]
    \indent Estimate the nuisance functions $\mu_d(x)$ and $p(x)$ on the sample defined by indices $\mathcal{I}_{-k}$ and denote the estimated functions by $\widehat{\mu}_d^{\mathcal{I}_{-k}}(x)$ and $\widehat{p}^{\mathcal{I}_{-k}}(x)$.\\[0.2cm]
    \textbf{end for}
    \item Estimate the ATE using the estimator:
    \begin{equation}
        \widehat{\theta}^{\text{DML}} = \frac{1}{N}\sum_{k=1}^{K} \sum_{i\in\mathcal{I}_k} \widehat\tau^{\mathcal{I}_{-k}}(Z_i)
    \end{equation}
    where:
    \begin{equation*}
        \widehat\tau^{\mathcal{I}_{-k}}(Z_i) = \widehat\mu^{\mathcal{I}_{-k}}_1(X_i) - \widehat\mu^{\mathcal{I}_{-k}}_0(X_i) + \frac{D_i}{\widehat{p}^{\mathcal{I}_{-k}}(X_i)}(Y_i-\widehat\mu^{\mathcal{I}_{-k}}_1(X_i)) - \frac{1-D_i}{1-\widehat{p}^{\mathcal{I}_{-k}}(X_i)}(Y_i-\widehat\mu^{\mathcal{I}_{-k}}_0(X_i))
    \end{equation*}
\end{enumerate}

\cite{Chernozhukov2018} provide an asymptotic theory for the DML estimator. In particular, they show that the estimator is consistent and asymptotically normal, even when using machine learning (ML) methods that converge at a slower rate than the parametric rate.

\subsection{Calibration estimator}
A shortcoming of the DML estimator is its poor finite sample performance when the treatment assignment is unbalanced, i.e. when either very few or very many observations are treated. In the following, without loss of generality, we consider only the case where very few are treated. ML models perform poorly when the data is unbalanced \citep{Japkowicz2002} and predict propensity scores that are potentially close to zero or one. A common approach to improving the performance of ML models is to undersample the observations that are overrepresented, i.e. the observations that are not treated \citep{He2009}.\footnote{Another common technique used in the ML literature is to oversample the minority class \citep{Japkowicz2002}. This approach might be problematic in the context of DML since it introduces dependence in the data, complicating the asymptotic theory of the proposed estimator. The convergence rates of common ML methods needed for the asymptotic results presented in Section \ref{subsec:asymptotic_results} have been proven for the case of independent and identically distributed data; see, e.g. \cite{Belloni2013} for the Lasso, \cite{Luo2016} for $L_2$ boosting, \cite{Wager2016} for random forests, and \cite{Chen1999} for neural networks.}

Undersampling the observations changes the underlying distribution of the data and the predicted propensity scores will be biased for the true propensity scores from the unbalanced distribution \citep{Pozzolo2015}. To formalize this concept, let $S_i$ be a random variable equal to 1 if observation $i$ is sampled and 0 otherwise. It then follows that $\Prob[S_i=1|D_i=1]=1$ since all treated observations are kept in the sample. For the untreated observations we have that $\Prob[S_i=1|D_i=0] = \E[D_i]/(1-\E[D_i]) =: \gamma$. Moreover, since the undersampling strategy is not dependent on the control variables $X$, we have that $\Prob[S_i=1\vert D_i=d, X_i] = \Prob[S_i=1\vert D_i=d]$. Using Bayes' rule it can be shown that the propensity score for the undersampled data $p_S(X_i)$ is given by \citep{Pozzolo2015}:
\begin{equation}\label{eq:undersampled_propensity}
    p_S(X_i) = \frac{p(X_i)}{p(X_i) + \gamma \cdot \left(1-p(X_i)\right)}
\end{equation}
from where it follows that $p_S(X_i)\neq p(X_i)$ for $\gamma < 1$.

A na\"ive solution to address this issue would be to not only undersample the data used for estimating the propensity score but also the data on which the efficient score function is computed. In other words, a valid strategy would be to undersample the entire dataset and apply DML to the undersampled data (hereafter referred to as U-DML). However, this approach reduces the number of observations from $N$ to $2\cdot \E[D_i] \cdot N$. In cases where only 5\% of the observations are treated, we would discard 90\% of the observations. 

We propose a calibration estimator that uses the entire sample and corrects for the bias in the propensity score. The main idea is to only undersample the data used for fitting the ML model for the propensity score. Using the relation between the true and the undersampled propensity score in Equation \eqref{eq:undersampled_propensity}, we calibrate the propensity scores to match the original data distribution. In more detail, the calibrated-undersampled DML (CU-DML) estimator is defined as follows:

\begin{enumerate}[label*=\textsc{Step} \arabic*]
    \item Estimate $\gamma$ as:
    \begin{equation}\label{eq:odd_estimator}
        \widehat{\gamma} = \frac{\sum_{i=1}^{N} D_i}{\sum_{i=1}^{N} (1-D_i)}
    \end{equation}
    \item For some fixed $K \in \{2, \dots, N\}$, randomly partition the observation indices into $K$ sets $\mathcal{I}_1, \dots, \mathcal{I}_K$ of equal size. Denote the complement of $\mathcal{I}_k$ by $\mathcal{I}_{-k} = \{1, \dots, N\}\setminus\mathcal{I}_k$. Denote the cardinality of each set of indices by $\vert \mathcal{I}_k \vert$.
    \item \textbf{for} $k=1$ \textbf{to} $K$ \textbf{do}:\\[0.2cm]
    \begin{enumerate}[label*=.\arabic*]
        \item Estimate the nuisance functions $\mu_d(X)$ on the sample defined by indices $\mathcal{I}_{-k}$ and denote the estimated functions by $\widehat{\mu}_d^{\mathcal{I}_{-k}}(X)$.
        \item Draw random variables $S_i$ for $i \in \mathcal{I}_{-k}$ from a Bernoulli distribution with probabilitiy $\Prob[S_i=1|D_i=1] = 1$ and $\Prob[S_i=1|D_i=0] = \widehat\gamma$. Define the undersampled indices as:
        \begin{equation*}
            \mathcal{I}_{-k}^S = \{ i \in \mathcal{I}_{-k} \vert S_i=1 \}
        \end{equation*}
        Estimate the nuisance function $p_S(X)$ on the sample defined by indices $\mathcal{I}^S_{-k}$ and denote the estimated function by $\widehat{p}_S^{\mathcal{I}^S_{-k}}(X)$.
        Calibrate the propensity score estimated on the undersampled data to match the original data distribution:
        \begin{equation}\label{eq:propensity_estimator}
            \widehat{p}^{\mathcal{I}_{-k}}(X) = \frac{\widehat{\gamma}\cdot \widehat{p}_S^{\mathcal{I}^S_{-k}}(X)}{\widehat{\gamma} \cdot \widehat{p}_S^{\mathcal{I}^S_{-k}}(X) + \left(1-\widehat{p}_S^{\mathcal{I}^S_{-k}}(X)\right)}
        \end{equation}
    \end{enumerate}
    \textbf{end for}
    \item Estimate the ATE using the estimator:
    \begin{equation}
        \widehat{\theta}^{\text{CU-DML}} = \frac{1}{N}\sum_{k=1}^{K} \sum_{i\in\mathcal{I}_k} \widehat\tau^{\mathcal{I}_{-k}}(Z_i)
    \end{equation}
    where:
    \begin{equation*}
        \widehat\tau^{\mathcal{I}_{-k}}(Z_i) = \widehat\mu^{\mathcal{I}_{-k}}_1(X_i) - \widehat\mu^{\mathcal{I}_{-k}}_0(X_i) + \frac{D_i}{\widehat{p}^{\mathcal{I}_{-k}}(X_i)}(Y_i-\widehat\mu^{\mathcal{I}_{-k}}_1(X_i)) - \frac{1-D_i}{1-\widehat{p}^{\mathcal{I}_{-k}}(X_i)}(Y_i-\widehat\mu^{\mathcal{I}_{-k}}_0(X_i))
    \end{equation*}
\end{enumerate}

Notice that, in contrast to the nuisance functions, $\gamma$ can be estimated from the entire sample. For $\widehat{\gamma} = 1$ the CU-DML estimator reduces to the classical DML estimator. Furthermore, the results presented in the next section show that asymptotically $\widehat{\theta}^{\text{CU-DML}}$ converges in probability to $\widehat{\theta}^{\text{DML}}$, also for $\widehat{\gamma} < 1$.

\subsection{Asymptotic results}\label{subsec:asymptotic_results}
We now present the asymptotic results for the CU-DML estimator. We start by introducing the following assumptions.

\begin{assumption}[Boundedness of conditional variances]\label{assumption:variance} For the conditional variance of the outcome it holds that:
    \begin{equation*}
        \sup_{x\in\mathcal{X}} \Var\left[Y|D=d,X=x\right] < \zeta
    \end{equation*}
    for some $\zeta < \infty$.
\end{assumption}

\begin{assumption}[Boundedness of the propensity score]\label{assumption:propensity_score}\phantom{empty}
    \begin{enumerate}[label=(\roman*)]
        \item\label{assumption:propensity_score_1} For the unconditional propensity score $\lambda:=\E[p(X)]=\E[D]$ and its estimator $\widehat{\lambda} = N^{-1} \sum_{i=1}^N D_i$ it holds that:
        \begin{equation*}
            \epsilon < \lambda < 1 - \epsilon \qquad \epsilon < \widehat{\lambda} < 1 - \epsilon
        \end{equation*}
        for some $\epsilon > 0$.
        \item\label{assumption:propensity_score_2} For all $x\in\mathcal{X}$ it holds that:
        \begin{equation*}
            \eta < p_S(x) < 1 - \eta
        \end{equation*}
        for some $\eta > 0$.
    \end{enumerate}
\end{assumption}

\begin{assumption}[Convergence of the ML estimators]\label[type]{assumption:convergence_ml}\phantom{empty}
    \begin{enumerate}[label=(\roman*)]
        \item\label{assumption:convergence_ml_1} The ML methods are sup-norm consistent:
        \begin{equation*}
            \sup_{x\in\mathcal{X}} \vert \widehat\mu_d(x) - \mu_d(x) \vert \overset{p}{\rightarrow} 0 \qquad \sup_{x\in\mathcal{X}} \vert \widehat{p}_S(x) - p_S(x) \vert \overset{p}{\rightarrow} 0
        \end{equation*}
        \item\label{assumption:convergence_ml_2} The ML methods have risk-decay rates that satisfy (risk-decay assumption):
        \begin{equation*}
            \E\left[\left(\widehat\mu_d(x) - \mu_d(x)\right)^2 \right] \E\left[\left(\widehat{p}_S(x) - p_S(x)\right)^2 \right] = o(N^{-1})
        \end{equation*}
    \end{enumerate}
\end{assumption}

These assumptions closely resemble the ones required for the asymptotic results of the DML estimator \citep{Chernozhukov2018,Wager2022}. In addition to the boundedness of the propensity score, Assumption~\ref{assumption:propensity_score}~\ref{assumption:propensity_score_1} requires the unconditional propensity score to be bounded away from zero and one. This assumption is natural, as in a situation where the expected propensity score equals zero (one), there are no (only) treated. Importantly, Assumption~\ref{assumption:propensity_score}~\ref{assumption:propensity_score_2} relaxes the usual bounds imposed on the propensity score, since $\tilde\eta<p(X)<1-\tilde\eta$ with $\tilde\eta<\eta$ whenever $\gamma<1$.\footnote{From Equation \eqref{eq:undersampled_propensity} it follows that $\tilde\eta<p(X)<1-\tilde\eta$ with $\tilde\eta = \gamma\eta/(1-\eta+\gamma\eta)$.} Assumption~\ref{assumption:convergence_ml} states the convergence rate requirements of the ML estimators in terms of the undersampled propensity score $p_S(X)$. The theoretical result is then given by the following theorem.

\begin{theorem}
    \label{theorem:convergence}
    Under Assumptions \ref{assumption:variance}, \ref{assumption:propensity_score} and \ref{assumption:convergence_ml} it holds that:
    \begin{equation*}
        \sqrt{N} \left( \widehat{\theta}^{\text{CU-DML}} - \theta \right) \overset{d}{\longrightarrow} \mathcal{N}\left(0, V^*\right)
    \end{equation*}
    where:
    \begin{equation*}
        V^* = \Var[\mu_1(X_i)-\mu_0(X_i)] + \E\left[\frac{\sigma_1^2(X_i)}{p(X_i)}\right] + \E\left[\frac{\sigma_0^2(X_i)}{1-p(X_i)}\right]
    \end{equation*}
    with $\sigma^2_d(X_i) = \Var[Y_i^d\vert X_i]$.
\end{theorem}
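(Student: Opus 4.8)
The plan is to reduce the claim to the standard double machine learning (DML) analysis of \citet{Chernozhukov2018}, using that $\cudml$ is \emph{literally} the DML estimator built from the nuisance triple $\bigl(\widehat\mu_0^{\mathcal I_{-k}},\widehat\mu_1^{\mathcal I_{-k}},\widehat p^{\mathcal I_{-k}}\bigr)$, the only new ingredient being that the propensity plug-in $\widehat p^{\mathcal I_{-k}}$ is the calibration \eqref{eq:propensity_estimator} of an undersampled fit. I would aim for the asymptotically linear expansion
\begin{equation*}
    \sqrt{N}\bigl(\cudml-\theta\bigr)=\frac{1}{\sqrt{N}}\sum_{i=1}^{N}\bigl(\tau(Z_i)-\theta\bigr)+o_p(1),
\end{equation*}
and then finish with the Lindeberg--Lévy CLT for the leading term plus a computation of $\Var[\tau(Z)]$. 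All of the new work lies in showing that $\widehat p^{\mathcal I_{-k}}$ inherits the two properties the DML argument needs from a propensity estimator: (a) uniform consistency for $p$ together with being bounded away from $0$ and $1$ with probability tending to one; and (b) the second-order (``product'') rate $\norm{\widehat\mu_d-\mu_d}_2\,\norm{\widehat p^{\mathcal I_{-k}}-p}_2=o_p(N^{-1/2})$. I expect establishing (a)--(b) --- transferring the rate and boundedness conditions through the calibration map --- to be the main obstacle.

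\emph{Step 1 (the calibration).} Since $\widehat\gamma=\widehat\lambda/(1-\widehat\lambda)$ and $\gamma=\lambda/(1-\lambda)$ with both $\lambda$ and $\widehat\lambda$ confined to $[\epsilon,1-\epsilon]$ by Assumption~\ref{assumption:propensity_score}~\ref{assumption:propensity_score_1}, the CLT for $\widehat\lambda=N^{-1}\sum_i D_i$ and the delta method for the $C^1$ map $t\mapsto t/(1-t)$ give $\widehat\gamma-\gamma=O_p(N^{-1/2})$. Write the calibration as $\phi(u,c)=cu/(cu+1-u)$, so that Equation~\eqref{eq:undersampled_propensity} reads $\phi(p_S(x),\gamma)=p(x)$; on the set $u\in[\eta,1-\eta]$, $c$ bounded away from $0$ and $\infty$, the denominator satisfies $cu+1-u\ge 1-u\ge\eta$, so $\phi$ is Lipschitz there. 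As $\widehat p_S^{\mathcal I^S_{-k}}$ is uniformly consistent for $p_S\in(\eta,1-\eta)$ (Assumptions~\ref{assumption:convergence_ml}~\ref{assumption:convergence_ml_1}, \ref{assumption:propensity_score}~\ref{assumption:propensity_score_2}) and $\widehat\gamma$ is bounded, with probability tending to one $(\widehat p_S^{\mathcal I^S_{-k}},\widehat\gamma)$ lies in this region, and there $\norm{\widehat p^{\mathcal I_{-k}}-p}_\infty\lesssim\norm{\widehat p_S^{\mathcal I^S_{-k}}-p_S}_\infty+|\widehat\gamma-\gamma|\overset{p}{\to}0$; this gives (a), since $p$ is itself bounded away from $0$ and $1$ (footnote to Assumption~\ref{assumption:propensity_score}). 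Squaring the pointwise Lipschitz bound and integrating over $X$ gives, with probability tending to one, $\norm{\widehat p^{\mathcal I_{-k}}-p}_2\lesssim\norm{\widehat p_S^{\mathcal I^S_{-k}}-p_S}_2+|\widehat\gamma-\gamma|$, hence
\begin{equation*}
    \norm{\widehat\mu_d-\mu_d}_2\,\norm{\widehat p^{\mathcal I_{-k}}-p}_2\;\lesssim\;\norm{\widehat\mu_d-\mu_d}_2\,\norm{\widehat p_S^{\mathcal I^S_{-k}}-p_S}_2+\norm{\widehat\mu_d-\mu_d}_2\,|\widehat\gamma-\gamma|\;=\;o_p(N^{-1/2}),
\end{equation*}
where the first term is $o_p(N^{-1/2})$ by Markov's inequality applied to Assumption~\ref{assumption:convergence_ml}~\ref{assumption:convergence_ml_2} and the second is $o_p(1)\cdot O_p(N^{-1/2})$ because $\norm{\widehat\mu_d-\mu_d}_2\le\norm{\widehat\mu_d-\mu_d}_\infty\overset{p}{\to}0$; this is (b). Two remarks for the bookkeeping: $|\mathcal I^S_{-k}|\asymp\lambda N\asymp N$ since $\lambda$ is bounded below, so the fold-wise ML estimators attain the rates in Assumption~\ref{assumption:convergence_ml}; and the indicators $S_i$, $i\in\mathcal I_{-k}$, are drawn independently of the held-out fold $\mathcal I_k$, so cross-fitting independence is preserved.

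\emph{Step 2 (the DML expansion).} First note that, although $\widehat\gamma$ is computed from the full sample, replacing it by the true $\gamma$ changes $\cudml$ by only $o_p(N^{-1/2})$: the induced perturbation of the propensity plug-in is $O_p(N^{-1/2})$, while its first-order effect on the estimator vanishes because the efficient score is Neyman-orthogonal in $p$ --- equivalently in the calibration parameter $\gamma$ --- so it suffices to analyze the estimator that plugs in the true $\gamma$, whose nuisances $(\widehat\mu_d^{\mathcal I_{-k}},\phi(\widehat p_S^{\mathcal I^S_{-k}},\gamma))$ are genuinely out-of-fold. Now follow \citet{Chernozhukov2018}: for each $k$, conditionally on the out-of-fold sample and the sampling indicators, split $|\mathcal I_k|^{-1}\sum_{i\in\mathcal I_k}(\widehat\tau^{\mathcal I_{-k}}-\tau)(Z_i)$ into its conditionally centered part and its conditional mean $m_k$. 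Rescaled by $\sqrt N$, the centered part has conditional variance at most $\Var\bigl(\widehat\tau^{\mathcal I_{-k}}(Z)-\tau(Z)\mid\mathcal I_{-k}\bigr)$, which is $o_p(1)$ by the uniform consistency of all three nuisances, Assumption~\ref{assumption:variance}, and boundedness of $D/\widehat p$ and $(1-D)/(1-\widehat p)$; hence it is $o_p(1)$ by conditional Chebyshev. For $m_k$, cross-fitting independence together with $\E[D\mid X]=p$, $\E[DY\mid X]=p\mu_1$ and $\E[(1-D)Y\mid X]=(1-p)\mu_0$ yields the product (Neyman-orthogonality) identity
\begin{equation*}
    m_k=\E\left[(\widehat\mu_1-\mu_1)\,\frac{\widehat p-p}{\widehat p}+(\widehat\mu_0-\mu_0)\,\frac{\widehat p-p}{1-\widehat p}\;\Big|\;\mathcal I_{-k}\right],
\end{equation*}
so by Cauchy--Schwarz together with (a)--(b), $|m_k|\lesssim\sum_d\norm{\widehat\mu_d-\mu_d}_2\,\norm{\widehat p-p}_2=o_p(N^{-1/2})$; since $|\mathcal I_k|\asymp N/K$, fold $k$ contributes $(|\mathcal I_k|/\sqrt N)\,m_k=o_p(1)$ to the remainder, and summing the $K$ fixed folds leaves $o_p(1)$. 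This establishes the displayed expansion.

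\emph{Step 3 (CLT and the variance).} The $Z_i$ are i.i.d., $\E[\tau(Z)]=\theta$ by \eqref{eq:dr}, and $\Var[\tau(Z)]<\infty$ under Assumptions~\ref{assumption:identification}, \ref{assumption:variance}, \ref{assumption:propensity_score}, so $N^{-1/2}\sum_i(\tau(Z_i)-\theta)\overset{d}{\to}\mathcal N\bigl(0,\Var[\tau(Z)]\bigr)$. To identify the variance, apply the law of total variance conditional on $X$: one has $\E[\tau(Z)\mid X]=\mu_1(X)-\mu_0(X)$, while the residuals $\tfrac{D}{p}(Y-\mu_1)$ and $\tfrac{1-D}{1-p}(Y-\mu_0)$ are conditionally mean zero and, since $D(1-D)=0$, uncorrelated given $X$, with conditional second moments $\sigma_1^2(X)/p(X)$ and $\sigma_0^2(X)/(1-p(X))$ (using Assumptions~\ref{assumption:identification}, \ref{assumption:variance}); hence $\Var[\tau(Z)\mid X]=\sigma_1^2(X)/p(X)+\sigma_0^2(X)/(1-p(X))$ and
\begin{align*}
    \Var[\tau(Z)]&=\E\bigl[\Var[\tau(Z)\mid X]\bigr]+\Var\bigl[\E[\tau(Z)\mid X]\bigr]\\
    &=\Var[\mu_1(X)-\mu_0(X)]+\E\left[\frac{\sigma_1^2(X)}{p(X)}\right]+\E\left[\frac{\sigma_0^2(X)}{1-p(X)}\right]=V^*,
\end{align*}
the \citet{Hahn1998} semiparametric efficiency bound. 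Combined with Step~2, this gives $\sqrt N(\cudml-\theta)\overset{d}{\to}\mathcal N(0,V^*)$.
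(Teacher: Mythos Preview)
Your proposal is correct and follows the same overall strategy as the paper --- establish the asymptotic linearity $\sqrt{N}(\cudml-\theta)=N^{-1/2}\sum_i(\tau(Z_i)-\theta)+o_p(1)$ and identify $\Var[\tau(Z)]=V^*$ --- but you organize the remainder analysis differently. The paper works foldwise and writes out the explicit three-term decomposition (the ``(A)/(B)/(C)'' split \`a la \citet{Wager2022}), bounding each directly; in particular, it handles the full-sample dependence of $\widehat\gamma$ by conditioning on $\{D_i\}_{i=1}^N$ inside the variance calculation for term~(B), and it unpacks the calibration identity $1/\widehat p-1/p=(1/(\gamma p_S)-1/(\widehat\gamma\widehat p_S))+(1/\widehat\gamma-1/\gamma)$ directly inside term~(C). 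You instead first transfer the sup-norm consistency and the product rate through the calibration map $\phi$ via a Lipschitz bound, then neutralize the $\widehat\gamma$ cross-fitting issue by replacing $\widehat\gamma$ with $\gamma$ (invoking Neyman orthogonality of the score in the propensity direction), and finally invoke the standard \citet{Chernozhukov2018} DML expansion as a black box. Your route is more modular --- once (a) and (b) are established, existing DML theory does the rest --- whereas the paper's route is more self-contained and makes the role of each assumption explicit at the term level. Both arguments are valid; your $\widehat\gamma-\gamma=O_p(N^{-1/2})$ is in fact the correct rate (the paper's Lemma writes $o_p(N^{-1/2})$, but only $O_p$ is actually proved and only $O_p$ is needed), and your separate treatment of $\widehat\gamma$ via orthogonality is arguably cleaner than the paper's conditioning trick.
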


The proof of Theorem \ref{theorem:convergence} is relegated to Appendix \ref{app:proofs}. Theorem \ref{theorem:convergence} shows that the CU-DML estimator has the same asymptotic distribution as the DML estimator. In particular, the estimator attains the parametric rate of convergence $\sqrt{N}$ and its variance achieves the semi-parametric efficiency bound \citep{Hahn1998}.

\section{Simulation study}
In this section, we investigate the finite sample performance of the CU-DML estimator in a simulation study. We consider two simulation strategies. In the first strategy, we generate data from a synthetic data generating process (DGP). In more detail, we generate data from the DGP used by \cite{Nie2020}:
\begin{align}
    X_i &\sim \text{Unif}[0,1]^{20}, \quad D_i \vert X_i \sim \text{Bernoulli}\left(p(X_i)\right), \quad \epsilon_i \overset{\text{iid}}{\sim} \mathcal{N}(0,1), \notag \\
    Y_i &= b(X_i) + (D_i-0.5) \left(X_{i,1} + X_{i,2}\right) + \sigma \epsilon_i.\label{eq:synthetic_dgp}
\end{align}
The baseline main effect is the scaled \cite{Friedman1991} function $b(X_i) = \sin\left(\pi X_{i,1}X_{i,2}\right) + 2\left(X_{i,3}-0.5\right)^2 + X_{i,4}+ 0.5X_{i,5}$. For the propensity score we follow \cite{Kunzel2019} and set $p(X_i) = \alpha \left(1+\beta_{2,4}\left(\min(X_{i,1},X_{i,2}) \right)\right)$ where $\beta_{2,4}(\cdot)$ is the beta cumulative distribution function with shape parameters 2 and 4. The share of treated is $\E[D_i] = (31/21) \alpha$ and the true ATE is 1. The innovation standard deviation is set to $\sigma=1$, and results for $\sigma=5$ are relegated to Appendix \ref{app:additional_results}.

In the second strategy, we analyze the CU-DML estimator's performance using an Empirical Monte Carlo Study (EMCS) approach \citep{Huber2013,Lechner2013}. Our EMCS, following \cite{Knaus2022}, uses a dataset from \cite{Lechner2020} of Swiss unemployed individuals in 2003. The dataset includes data on the effect of a job search program (treatment) on the number of months employed in the first six months after the start of the program (outcome), and 49 covariates providing information on individual characteristics of the unemployed, the regional employment agency, and regional labour market characteristics. For more data details, see \cite{Knaus2022}. The EMCS proceeds as follows. We estimate the propensity score using a logistic regression on the entire sample of 91'339 unemployed individuals. Then, we define a new sample consisting of only the non-treated individuals whose fitted propensity score $\hat{p}_i$ lies between $0.05$ and $0.95$. In each simulation, we draw a random sample with replacement of size $N$ from this new sample and randomly assign treatments as $D_i = \mathds{1}{\{V_i < \hat{p}_i/\lambda\}}$, where $\hat{p}_i$ is the fitted propensity score from the logistic regression, $V_i \sim \text{Unif}[0,1]$ and $\lambda$ controls the share of treated $\E[D_i]$. The true ATE is 0 by construction.

We compare the CU-DML estimator with the DML estimator \citep{Chernozhukov2018}, and three popular adjustments of the DML estimator that are designed to improve its finite sample performance when the treatment assignment is unbalanced.\footnote{Over the past two decades, these adjustments have been primarily employed to improve the finite sample performance of the inverse-probability-weighted estimator when the treatment assignment is unbalanced; see, among others, the extensive simulation study of \cite{Huber2013}. More recently, these approaches have also been employed to improve the performance of the DML estimator.} The W-DML estimator uses winsorized propensity scores at 0.01 and 0.99 \citep{Imbens2004}. In the N-DML estimator we normalize the weights $w_i^{\mathcal{I}_{-k}} := D_i/\widehat{p}^{\mathcal{I}_{-k}}(X_i)$ to sum to unity \citep[e.g.][]{Wooldridge2018}. Following \cite{Huber2013}, the T-DML estimator normalizes the weights $w_i^{\mathcal{I}_{-k}}$ and then truncates them to not exceed 0.04, before normalizing them again. Finally, we also consider the U-DML estimator, where the entire sample is undersampled and the DML estimator is applied to the undersampled data. All estimators use 5-fold cross-fitting. The nuisance functions are estimated using random forests \citep{Breiman2001} with 500 trees. The maximal depth of the trees and the minimal number of observations in the leaf nodes are determined by 5-fold cross-validation over 20 simulation replications and set to the most frequently selected values.\footnote{Properely tuning the hyperparameters of the random forests is crucial, especially for DML when the treatment assignment is unbalanced. The popular Python implementation of random forest, \texttt{scikit-learn} \citep{scikit-learn}, sets the default minimal number of observations in the leaf nodes to 1. This may lead to highly unstable propensity scores when the treatment assignment is unbalanced. See \cite{Bach2024} for a recent simulation study on the importance of hyperparameter-tuning for the DML estimator.} Details on the software used for the simulations are provided in Appendix \ref{appsub:implementation}.

\begin{table}[!tb]
    \centering
    \begin{threeparttable}
        \caption{Statistics of the two simulation strategies}
        \label{tab:simulation_results}
        \scriptsize
        \begin{tabular}{rrrrr|rrrr|rrrr}
            \multicolumn{13}{c}{\textbf{Panel A:} Synthetic DGP}\\
            \toprule
            & \multicolumn{4}{c}{$N=2000$, $\E[D_i]=2.5\%$, $\sigma=1$} & \multicolumn{4}{c}{$N=4000$, $\E[D_i]=2.5\%$, $\sigma=1$} & \multicolumn{4}{c}{$N=8000$, $\E[D_i]=2.5\%$, $\sigma=1$} \\
            & RMSE & Bias & Std. dev. & Coverage & RMSE & Bias & Std. dev. & Coverage & RMSE & Bias & Std. dev. & Coverage \\
           \midrule
           DML & - & - & - & - & 0.188 & 0.008 & 0.188 & 0.984 & 0.094 & 0.005 & 0.094 & 0.979 \\
           U-DML & - & - & - & - & 0.173 & 0.070 & 0.158 & 0.928 & 0.122 & 0.057 & 0.108 & 0.924 \\
           CU-DML & - & - & - & - & \textbf{0.129} & 0.038 & 0.123 & 0.938 & \textbf{0.086} & 0.018 & 0.084 & 0.953 \\
           W-DML & - & - & - & - & 0.145 & 0.031 & 0.142 & 0.971 & 0.091 & 0.010 & 0.091 & 0.978 \\
           N-DML & - & - & - & - & 0.149 & 0.028 & 0.146 & 0.937 & 0.089 & 0.012 & 0.088 & 0.943 \\
           T-DML & - & - & - & - & 0.145 & 0.035 & 0.141 & 0.942 & 0.089 & 0.012 & 0.088 & 0.943 \\
           \toprule
           & \multicolumn{4}{c}{$N=2000$, $\E[D_i]=5.0\%$, $\sigma=1$} & \multicolumn{4}{c}{$N=4000$, $\E[D_i]=5.0\%$, $\sigma=1$} & \multicolumn{4}{c}{$N=8000$, $\E[D_i]=5.0\%$, $\sigma=1$} \\
           & RMSE & Bias & Std. dev. & Coverage & RMSE & Bias & Std. dev. & Coverage & RMSE & Bias & Std. dev. & Coverage \\
          \midrule
          DML & 0.160 & 0.022 & 0.158 & 0.985 & 0.089 & 0.009 & 0.089 & 0.976 & 0.060 & 0.002 & 0.060 & 0.962 \\
          U-DML & 0.176 & 0.071 & 0.161 & 0.929 & 0.118 & 0.051 & 0.107 & 0.937 & 0.091 & 0.047 & 0.077 & 0.910 \\
          CU-DML & \textbf{0.132} & 0.039 & 0.126 & 0.946 & \textbf{0.086} & 0.017 & 0.084 & 0.960 & 0.059 & 0.008 & 0.059 & 0.950 \\
          W-DML & 0.156 & 0.023 & 0.155 & 0.985 & 0.089 & 0.009 & 0.089 & 0.976 & 0.060 & 0.002 & 0.060 & 0.962 \\
          N-DML & 0.143 & 0.035 & 0.138 & 0.934 & 0.086 & 0.014 & 0.085 & 0.957 & \textbf{0.059} & 0.004 & 0.059 & 0.940 \\
          T-DML & 0.143 & 0.036 & 0.138 & 0.935 & 0.086 & 0.014 & 0.085 & 0.956 & \textbf{0.059} & 0.004 & 0.059 & 0.940 \\
          \toprule
          & \multicolumn{4}{c}{$N=2000$, $\E[D_i]=10.0\%$, $\sigma=1$} & \multicolumn{4}{c}{$N=4000$, $\E[D_i]=10.0\%$, $\sigma=1$} & \multicolumn{4}{c}{$N=8000$, $\E[D_i]=10.0\%$, $\sigma=1$} \\
          & RMSE & Bias & Std. dev. & Coverage & RMSE & Bias & Std. dev. & Coverage & RMSE & Bias & Std. dev. & Coverage \\
         \midrule
         DML & 0.097 & 0.016 & 0.095 & 0.952 & 0.060 & 0.009 & 0.060 & 0.961 & 0.043 & 0.002 & 0.043 & 0.943 \\
         U-DML & 0.126 & 0.057 & 0.113 & 0.924 & 0.088 & 0.046 & 0.075 & 0.915 & 0.069 & 0.042 & 0.055 & 0.867 \\
         CU-DML & \textbf{0.094} & 0.022 & 0.091 & 0.947 & 0.062 & 0.013 & 0.060 & 0.954 & 0.044 & 0.005 & 0.043 & 0.928 \\
         W-DML & 0.097 & 0.016 & 0.095 & 0.952 & 0.060 & 0.009 & 0.060 & 0.961 & 0.043 & 0.002 & 0.043 & 0.943 \\
         N-DML & 0.095 & 0.020 & 0.093 & 0.924 & \textbf{0.060} & 0.010 & 0.059 & 0.950 & \textbf{0.043} & 0.003 & 0.043 & 0.932 \\
         T-DML & 0.095 & 0.020 & 0.093 & 0.924 & \textbf{0.060} & 0.010 & 0.059 & 0.950 & \textbf{0.043} & 0.003 & 0.043 & 0.932 \\
         \bottomrule\\[0.1cm]

         \multicolumn{13}{c}{\textbf{Panel B:} Empirical Monte Carlo Study}\\
         \toprule
         & \multicolumn{4}{c}{$N=2000$, $\E[D_i]=2.5\%$} & \multicolumn{4}{c}{$N=4000$, $\E[D_i]=2.5\%$} & \multicolumn{4}{c}{$N=8000$, $\E[D_i]=2.5\%$} \\
         & RMSE & Bias & Std. dev. & Coverage & RMSE & Bias & Std. dev. & Coverage & RMSE & Bias & Std. dev. & Coverage \\
        \midrule
        DML & - & - & - & - & 0.314 & 0.051 & 0.309 & 0.984 & 0.194 & 0.041 & 0.189 & 0.975 \\
        U-DML & - & - & - & - & 0.261 & 0.024 & 0.260 & 0.950 & 0.183 & 0.013 & 0.183 & 0.962 \\
        CU-DML & - & - & - & - & \textbf{0.208} & 0.016 & 0.208 & 0.946 & \textbf{0.148} & 0.023 & 0.146 & 0.944 \\
        W-DML & - & - & - & - & 0.233 & 0.044 & 0.228 & 0.979 & 0.162 & 0.051 & 0.154 & 0.961 \\
        N-DML & - & - & - & - & 0.253 & 0.057 & 0.246 & 0.948 & 0.174 & 0.050 & 0.166 & 0.940 \\
        T-DML & - & - & - & - & 0.237 & 0.051 & 0.231 & 0.937 & 0.167 & 0.052 & 0.159 & 0.938 \\
        \toprule
        & \multicolumn{4}{c}{$N=2000$, $\E[D_i]=5.0\%$} & \multicolumn{4}{c}{$N=4000$, $\E[D_i]=5.0\%$} & \multicolumn{4}{c}{$N=8000$, $\E[D_i]=5.0\%$} \\
        & RMSE & Bias & Std. dev. & Coverage & RMSE & Bias & Std. dev. & Coverage & RMSE & Bias & Std. dev. & Coverage \\
       \midrule
       DML & 0.256 & 0.049 & 0.252 & 0.977 & 0.172 & 0.043 & 0.167 & 0.962 & 0.111 & 0.032 & 0.107 & 0.959 \\
       U-DML & 0.262 & 0.012 & 0.261 & 0.964 & 0.180 & 0.012 & 0.179 & 0.953 & 0.129 & 0.022 & 0.127 & 0.948 \\
       CU-DML & \textbf{0.213} & 0.015 & 0.213 & 0.958 & \textbf{0.143} & 0.023 & 0.141 & 0.943 & \textbf{0.101} & 0.023 & 0.099 & 0.941 \\
       W-DML & 0.244 & 0.051 & 0.239 & 0.976 & 0.164 & 0.044 & 0.158 & 0.962 & 0.108 & 0.033 & 0.103 & 0.957 \\
       N-DML & 0.233 & 0.056 & 0.226 & 0.950 & 0.163 & 0.048 & 0.156 & 0.941 & 0.110 & 0.036 & 0.104 & 0.941 \\
       T-DML & 0.226 & 0.056 & 0.219 & 0.946 & 0.161 & 0.048 & 0.154 & 0.941 & 0.110 & 0.036 & 0.104 & 0.941 \\
       \toprule
       & \multicolumn{4}{c}{$N=2000$, $\E[D_i]=10.0\%$} & \multicolumn{4}{c}{$N=4000$, $\E[D_i]=10.0\%$} & \multicolumn{4}{c}{$N=8000$, $\E[D_i]=10.0\%$} \\
       & RMSE & Bias & Std. dev. & Coverage & RMSE & Bias & Std. dev. & Coverage & RMSE & Bias & Std. dev. & Coverage \\
      \midrule
      DML & 0.156 & 0.033 & 0.153 & 0.962 & 0.107 & 0.024 & 0.105 & 0.961 & 0.072 & 0.014 & 0.071 & 0.960 \\
      U-DML & 0.184 & 0.016 & 0.183 & 0.962 & 0.128 & 0.017 & 0.127 & 0.955 & 0.088 & 0.014 & 0.087 & 0.952 \\
      CU-DML & \textbf{0.146} & 0.020 & 0.144 & 0.957 & \textbf{0.099} & 0.016 & 0.098 & 0.950 & \textbf{0.069} & 0.012 & 0.068 & 0.949 \\
      W-DML & 0.156 & 0.033 & 0.152 & 0.962 & 0.106 & 0.023 & 0.104 & 0.961 & 0.072 & 0.014 & 0.071 & 0.960 \\
      N-DML & 0.154 & 0.037 & 0.149 & 0.946 & 0.107 & 0.026 & 0.103 & 0.952 & 0.072 & 0.016 & 0.070 & 0.951 \\
      T-DML & 0.153 & 0.038 & 0.149 & 0.945 & 0.106 & 0.025 & 0.103 & 0.952 & 0.072 & 0.016 & 0.070 & 0.951 \\
            \bottomrule
        \end{tabular}
        \begin{tablenotes}
            \small
            \item \textsc{Note}: The table reports the root mean squared error (RMSE), the absolute value of the average bias, the standard deviation, and the coverage of the 95\% confidence interval for the ATE estimators across 1'000 simulations. Panel A reports the results for the synthetic DGP where data is generated according to Equation \eqref{eq:synthetic_dgp}, while Panel B reports the results for the Empirical Monte Carlo Study. For each of the two simulation strategies, the table reports the results for 8 different simulation settings, where the sample size $N$ and the share of treated $\E[D_i]$ are varied. For the synthetic DGP, the innovation standard deviation $\sigma$ is set to 1, which corresponds to a signal-to-noise ratio of $\Var[y]/\sigma^2 = 1.3$. The best performance in terms of RMSE is highlighted in bold.
        \end{tablenotes}
    \end{threeparttable}
\end{table}

The results are presented in Table \ref{tab:simulation_results}. The table reports the root mean squared error (RMSE), the absolute value of the average bias, the standard deviation, and the coverage of the 95\% confidence interval for each ATE estimator. The simulation is repeated 1'000 times for each of the following tuples of sample size and share of treated: $\left(N, \E[D_i]\right)\allowbreak \in \allowbreak\{(2'000, 0.05), (2'000, 0.1), (4'000, 0.025), (4'000, 0.05),(4'000, 0.1), (8'000, 0.025), (8'000, 0.05), (8'000, 0.1)\}$. The case where $N=2'000$ and $\E[D_i]=0.025$ is excluded since we would only have 50 treated observations. The best performance in terms of RMSE in each simulation is highlighted in bold. For the synthetic DGP (Panel A), the lowest RMSE is achieved by the CU-DML estimator in 5 out of 8 settings. Only when at least 400 observations are treated, the CU-DML estimator is outperformed in terms of RMSE by the normalized and trimmed DML estimators. However, the RMSE differences between N-DML, T-DML, and CU-DML estimators are considerably small in these cases. These results are unaffected by a higher innovation standard deviation $\sigma=5$, see Table \ref{tab:simulation_results_sigma5}. For the EMCS (Panel B), CU-DML achieves the smallest RMSE for all combinations of $\left(N, \E[D_i]\right)$. Also for the EMCS, we observe that the outperformance of the CU-DML estimator decreases for larger samples and/or less imbalanced samples. These patterns are in line with the findings from the machine learning classification literature, showing that class imbalancedness is a relative problem, related to both the degree of imbalancedness and the sample size \citep{Japkowicz2002}.

The bias and standard deviation of the estimators unveil that, while the analyzed estimators do not differ substantially in terms of their bias, the CU-DML estimator has the lowest standard deviation across all simulation strategies and settings. The simulation study also confirms the theoretical result presented in the previous section. First, the coverage of the proposed estimator is in all strategies and settings close to 95\%. Second, its RMSE decreases at rate $\sqrt{N}$.

\section{Conclusion}
This paper addresses a common finite sample problem of the double-machine learning ATE estimator (DML) \citep{Chernozhukov2018}: in settings with unbalanced treatment assignment estimations of the propensity scores become either too close to zero or one. This causes the DML estimator to become unstable, especially in small samples. We propose a simple yet effective adjustment of the DML estimator (CU-DML) where the machine learning models for the propensity scores are estimated over an undersampled dataset. The resulting propensity score predictions are then calibrated to adjust for the undersampling. 

We provide theoretical results for the CU-DML estimator, showing that it has the same asymptotic distribution as the DML estimator. In particular, the estimator attains the parametric rate of convergence $\sqrt{N}$ and its variance achieves the semi-parametric efficiency bound \citep{Hahn1998}. Furthermore, a small simulation study provides evidence for the finite sample performance of the proposed estimator, showing that it is of particular use in settings with highly unbalanced treatment assignments or small samples.

Future research could adapt the proposed approach to other estimators that rely on the estimation of the propensity score, such as the inverse probability-weighted estimator. Furthermore, CU-DML could be extended to estimators of the conditional average treatment effect \citep{Fan2022,Zimmert2019}.

\newpage
\bibliographystyle{ecta}
\bibliography{references}

\newpage
\begin{appendix}
    \section{Appendix: Proofs}\label{app:proofs}
    \renewcommand{\theequation}{\thesection\arabic{equation}}
    \setcounter{equation}{0}
    \renewcommand{\thelemma}{\thesection\arabic{lemma}}

    \begin{lemma}
        \label{lemma:convergence_share}
        Let $\gamma = \E[D]/(1-\E[D])$ and $\widehat{\gamma} = \left(\sum_{i=1}^N D_i\right)/\left(1-\sum_{i=1}^N D_i\right)$. Under Assumption \ref{assumption:propensity_score} it holds that:
        \begin{equation*}
            \vert \widehat{\gamma} - \gamma \vert = o_p(N^{-1/2})
        \end{equation*}
    \end{lemma}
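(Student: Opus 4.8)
The plan is to reduce the claim to the parametric-rate convergence of a single sample mean. Write $g(x):=x/(1-x)$ together with $\lambda:=\E[D]$ and $\widehat\lambda:=N^{-1}\sum_{i=1}^N D_i$, so that, reading $\widehat\gamma$ as in~\eqref{eq:odd_estimator} (i.e.\ $\widehat\gamma=\widehat\lambda/(1-\widehat\lambda)$), we have $\gamma=g(\lambda)$ and $\widehat\gamma=g(\widehat\lambda)$. Since $g$ is smooth away from $x=1$, it then suffices to (i) bound $\widehat\lambda-\lambda$ at the $N^{-1/2}$ rate, and (ii) transfer that rate through $g$ via a Lipschitz bound valid on the region to which $\lambda$ and $\widehat\lambda$ are confined.

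For (i): the $D_i$ are i.i.d.\ Bernoulli with mean $\lambda$, hence $\E[\widehat\lambda]=\lambda$ and $\Var[\widehat\lambda]=\lambda(1-\lambda)/N\le 1/(4N)$, so Chebyshev's inequality gives $\widehat\lambda-\lambda=O_p(N^{-1/2})$ (the CLT even yields a nondegenerate Gaussian limit). For (ii): on $[\epsilon,1-\epsilon]$ one has $|g'(x)|=(1-x)^{-2}\le\epsilon^{-2}$, and by Assumption~\ref{assumption:propensity_score}~\ref{assumption:propensity_score_1} both $\lambda$ and $\widehat\lambda$ — and therefore the entire segment between them — lie in $[\epsilon,1-\epsilon]$; the mean value theorem then gives $|\widehat\gamma-\gamma|=|g(\widehat\lambda)-g(\lambda)|\le\epsilon^{-2}|\widehat\lambda-\lambda|$. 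Combining (i) and (ii) yields $|\widehat\gamma-\gamma|=O_p(N^{-1/2})$.

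I do not expect a substantive obstacle: this is essentially a one-dimensional delta-method bound. The only step needing genuine care is keeping the denominator $1-\widehat\lambda$ bounded away from $0$ so that the Lipschitz constant of $g$ does not blow up, and that is precisely what Assumption~\ref{assumption:propensity_score}~\ref{assumption:propensity_score_1} supplies. One caveat on the rate: the argument above — and indeed the delta method, which gives $\sqrt N(\widehat\gamma-\gamma)\overset{d}{\to}\mathcal{N}\!\left(0,\lambda/(1-\lambda)^3\right)$ — shows $\widehat\gamma-\gamma$ is $O_p(N^{-1/2})$ and cannot be $o_p(N^{-1/2})$, so I would state the lemma with $O_p$ in place of $o_p$; this is all that is used downstream, where $\widehat\gamma-\gamma$ multiplies a nuisance-estimation error that is itself $o_p(1)$, or where only its square $O_p(N^{-1})$ enters.
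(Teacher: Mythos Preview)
Your argument is essentially the paper's own: both bound $|\widehat\gamma-\gamma|$ by $\epsilon^{-2}|\widehat\lambda-\lambda|$ using Assumption~\ref{assumption:propensity_score}\ref{assumption:propensity_score_1} (you via the mean value theorem applied to $g(x)=x/(1-x)$, the paper via the direct algebraic identity $\widehat\lambda/(1-\widehat\lambda)-\lambda/(1-\lambda)=(\widehat\lambda-\lambda)/((1-\widehat\lambda)(1-\lambda))$), and then invoke the parametric rate of the sample mean.

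Your caveat is well taken and in fact sharper than the paper. The paper writes $\Var[\widehat\lambda]=\lambda(1-\lambda)/N=o(N^{-1})$ and concludes $|\widehat\lambda-\lambda|=o_p(N^{-1/2})$; this is an error, since $\lambda(1-\lambda)/N$ is exactly of order $N^{-1}$ when $\lambda\in(0,1)$, and indeed the CLT gives a nondegenerate limit for $\sqrt N(\widehat\gamma-\gamma)$. The correct rate is $O_p(N^{-1/2})$, which --- as you point out --- is all that is actually used downstream: in the proof of Theorem~\ref{theorem:convergence} the term $\norm{\widehat\gamma-\gamma}_2$ is only ever multiplied by $\norm{\widehat\mu_1-\mu_1}_2=o(1)$, and for the sup-norm consistency of the calibrated propensity score only $|\widehat\gamma-\gamma|=o_p(1)$ is required.
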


\begin{proof}[Proof of Lemma \ref{lemma:convergence_share}]
    Let $\widehat{p} = N^{-1} \sum_{i=1}^N D_i$ and consequently $\widehat{\gamma} = \widehat{p}/(1-\widehat{p}).$ Since $\E[\widehat{p}] = \E[D] = p$, we have that:
    \begin{align*}
        \E[\vert \widehat{p} - p \vert^2] = \Var[\widehat{p}] = \frac{p(1-p)}{N} = o(N^{-1})
    \end{align*}
    and therefore $\vert \widehat{p} - p \vert = o_p(N^{-1/2})$. For $\widehat{\gamma}$ we have:
    \begin{equation*}
        \vert \widehat{\gamma} - \gamma \vert = \left\vert \frac{\widehat{p}}{1-\widehat{p}} - \frac{p}{1-p} \right\vert = \left\vert \frac{\widehat{p}(1-p) - p(1-\widehat{p})}{(1-\widehat{p})(1-p)} \right\vert \leq  \frac{1}{\epsilon^2} \vert \widehat{p} - p \vert = o_p(N^{-1/2})
    \end{equation*}
\end{proof}

\begin{proof}[Proof of Theorem \ref{theorem:convergence}]
    The proof follows the approach of \cite{Wager2022}. If the true nuisance functions $\mu_d(X)$ and $p(X)$ are known, the oracle estimator for $\theta$ given by:
    \begin{align*}
        \widetilde{\theta} &= \frac{1}{N} \sum_{i=1}^N \tau(Z_i)\\
        &= \frac{1}{N} \sum_{i=1}^N \left(\mu_1(X_i) - \mu_0(X_i) + \frac{D_i}{p(X_i)}(Y_i - \mu_1(X_i)) - + \frac{1-D_i}{1-p(X_i)}(Y_i - \mu_0(X_i)) \right) 
    \end{align*}
    is a sample average of i.i.d. random variables and by the central limit theorem we have that:
    \begin{equation*}
        \sqrt{N}\left(\widetilde{\theta} - \theta \right) \overset{d}{\longrightarrow} \mathcal{N}\left(0, V^*\right).
    \end{equation*}
    It therefore sufficies to show that $\sqrt{N}\left(\widetilde{\theta} - \cudml \right) = o_p(1)$. Notice that $\theta = \theta_1 - \theta_0$ where $\theta_d = \E[Y^d]$. Moreover, $\widetilde{\theta} = \widetilde{\theta}_1 - \widetilde{\theta}_0$ and $\cudml = \cudml_1 - \cudml_0$. Therefore, it sufficies to show that $\sqrt{N}\left(\widetilde{\theta}_1 - \cudml_1 \right) = o_p(1)$, the result for $\widetilde{\theta}_0$ follows analogously. The estimator can be further re-written as:
    \begin{align*}
        \cudml_1 &= \frac{1}{N}\sum_{k=1}^{K} \sum_{i\in\mathcal{I}_k} \widehat\mu^{\mathcal{I}_{-k}}_1(X_i) + \frac{D_i}{\widehat{p}^{\mathcal{I}_{-k}}(X_i)}(Y_i-\widehat\mu^{\mathcal{I}_{-k}}_1(X_i))\\
        &= \sum_{k=1}^{K} \frac{\vert \mathcal{I}_k \vert}{N} \cudml_{1,\mathcal{I}_k}
    \end{align*}
    where:
    \begin{equation*}
        \cudml_{1,\mathcal{I}_k} = \frac{1}{\vert \mathcal{I}_k \vert} \sum_{i\in\mathcal{I}_k} \widehat\mu^{\mathcal{I}_{-k}}_1(X_i) + \frac{D_i}{\widehat{p}^{\mathcal{I}_{-k}}(X_i)}(Y_i-\widehat\mu^{\mathcal{I}_{-k}}_1(X_i)).
    \end{equation*}
    Analogously, we can defined the oracle estimator $\widetilde{\theta}_{1,\mathcal{I}_k}$. It therefore sufficies to show that $\sqrt{N}\left(\widetilde{\theta}_{1,\mathcal{I}_k} - \cudml_{1,\mathcal{I}_k} \right) = o_p(1)$ for all $k=1,\dots,K$. We can decompose the difference between the oracle and the CU-DML estimator as:
    \begin{align*}
        \widetilde{\theta}_{1,\mathcal{I}_k} - \cudml_{1,\mathcal{I}_k} = &\underbrace{\frac{1}{\vert \mathcal{I}_k \vert} \sum_{i\in\mathcal{I}_k}  \left( \widehat\mu^{\mathcal{I}_{-k}}_1(X_i) - \mu_1(X_i) \right)\left(1-\frac{D_i}{p(X_i)} \right)}_{\text{(A)}}\\
        +& \underbrace{\frac{1}{\vert \mathcal{I}_k \vert} \sum_{i\in\mathcal{I}_k} D_i \left(\left(Y_i - \mu_1(X_i)\right) \left( \frac{1}{\widehat{p}^{\mathcal{I}_{-k}}(X_i)} - \frac{1}{p(X_i)}\right) \right)}_{\text{(B)}}\\
        -& \underbrace{\frac{1}{\vert \mathcal{I}_k \vert} \sum_{i\in\mathcal{I}_k} D_i \left(\left(\widehat\mu^{\mathcal{I}_{-k}}_1(X_i)  - \mu_1(X_i)\right) \left( \frac{1}{\widehat{p}^{\mathcal{I}_{-k}}(X_i)} - \frac{1}{p(X_i)}\right) \right)}_{\text{(C)}}.
    \end{align*}
    We will show that each of the three terms converges to zero in probability at rate $N^{-1/2}$. Term (A) is not dependent on the estimated propensity score and is not affected by undersampling. As such, the usual arguments apply for its convergence and we refer to \cite{Wager2022} for the details. For term (B) we can show that its squared $L_2$-norm is $o\left(N^{-1/2}\right)$:
    \begin{align}\label{eq:l2_B}
        \begin{split}
            &\norm{\frac{1}{\vert \mathcal{I}_k \vert} \sum_{i\in\mathcal{I}_k} D_i \left(\left(Y_i - \mu_1(X_i)\right) \left( \frac{1}{\widehat{p}^{\mathcal{I}_{-k}}(X_i)} - \frac{1}{p(X_i)}\right) \right)}_2^2\\ 
            &= \E\left[\left(\frac{1}{\vert \mathcal{I}_k \vert} \sum_{i\in\mathcal{I}_k} D_i \left(\left(Y_i - \mu_1(X_i)\right) \left( \frac{1}{\widehat{p}^{\mathcal{I}_{-k}}(X_i)} - \frac{1}{p(X_i)}\right) \right) \right)^2 \right]\\
            &= \E\left[ \E\left[\left(\frac{1}{\vert \mathcal{I}_k \vert} \sum_{i\in\mathcal{I}_k} D_i \left(\left(Y_i - \mu_1(X_i)\right) \left( \frac{1}{\widehat{p}^{\mathcal{I}_{-k}}(X_i)} - \frac{1}{p(X_i)}\right) \right) \right)^2  \Bigg\vert \{Z_i\}_{i\in \mathcal{I}_{-k}},  \{D_i\}_{i=1}^N \right]\right]\\
            &= \E\left[ \Var\left[\frac{1}{\vert \mathcal{I}_k \vert} \sum_{i\in\mathcal{I}_k} D_i \left(\left(Y_i - \mu_1(X_i)\right) \left( \frac{1}{\widehat{p}^{\mathcal{I}_{-k}}(X_i)} - \frac{1}{p(X_i)}\right) \right)  \Bigg\vert \{Z_i\}_{i\in \mathcal{I}_{-k}},  \{D_i\}_{i=1}^N \right]\right]\\
            &= \frac{1}{\vert \mathcal{I}_k \vert^2} \E\left[  \sum_{i\in\mathcal{I}_k} \Var\left[ D_i \left(\left(Y_i - \mu_1(X_i)\right) \left( \frac{1}{\widehat{p}^{\mathcal{I}_{-k}}(X_i)} - \frac{1}{p(X_i)}\right) \right)  \Bigg\vert \{Z_i\}_{i\in \mathcal{I}_{-k}},  \{D_i\}_{i=1}^N \right]\right]\\
            &= \frac{1}{\vert \mathcal{I}_k \vert} \E\left[ \Var\left[ D_i \left(\left(Y_i - \mu_1(X_i)\right) \left( \frac{1}{\widehat{p}^{\mathcal{I}_{-k}}(X_i)} - \frac{1}{p(X_i)}\right) \right)  \Bigg\vert \{Z_i\}_{i\in \mathcal{I}_{-k}},  \{D_i\}_{i=1}^N \right]\right]\\
            &= \frac{1}{\vert \mathcal{I}_k \vert} \E\left[ D_i \left(Y_i - \mu_1(X_i)\right)^2 \left( \frac{1}{\widehat{p}^{\mathcal{I}_{-k}}(X_i)} - \frac{1}{p(X_i)}\right)^2 \right]\\
            &\leq \frac{1}{\vert \mathcal{I}_k \vert} \zeta \E\left[\left( \frac{1}{\widehat{p}^{\mathcal{I}_{-k}}(X_i)} - \frac{1}{p(X_i)}\right)^2 \right] = \frac{o(1)}{N}.
        \end{split}
    \end{align}
    The fourth equality in Equation \eqref{eq:l2_B} follows from the fact that for any $i,j\in \mathcal{I}_k$ with $i\neq j$ we have:
    \begin{align*}
        \Cov\Bigg[& D_i \left(\left(Y_i - \mu_1(X_i)\right) \left( \frac{1}{\widehat{p}^{\mathcal{I}_{-k}}(X_i)} - \frac{1}{p(X_i)}\right) \right),\\ 
        & D_j \left(\left(Y_j - \mu_1(X_j)\right) \left( \frac{1}{\widehat{p}^{\mathcal{I}_{-k}}(X_j)} - \frac{1}{p(X_j)}\right) \right) \Bigg\vert \{Z_i\}_{i\in \mathcal{I}_{-k}},  \{D_i\}_{i=1}^N \Bigg]\\
        = \E\Bigg[& D_i \left(\left(Y_i - \mu_1(X_i)\right) \left( \frac{1}{\widehat{p}^{\mathcal{I}_{-k}}(X_i)} - \frac{1}{p(X_i)}\right) \right)\\ 
        & D_j \left(\left(Y_j - \mu_1(X_j)\right) \left( \frac{1}{\widehat{p}^{\mathcal{I}_{-k}}(X_j)} - \frac{1}{p(X_j)}\right) \right) \Bigg\vert \{Z_i\}_{i\in \mathcal{I}_{-k}},  \{D_i\}_{i=1}^N \Bigg]\\
        = \E\Bigg[& \left( \frac{1}{\widehat{p}^{\mathcal{I}_{-k}}(X_i)} - \frac{1}{p(X_i)}\right) \left( \frac{1}{\widehat{p}^{\mathcal{I}_{-k}}(X_j)} - \frac{1}{p(X_j)}\right) \\ 
        & \E[D_i(Y_i - \mu_1(X_i)) \vert X_i, D_i ] \E[D_j(Y_j - \mu_1(X_j)) \vert X_j, D_j ] \Bigg\vert \{Z_i\}_{i\in \mathcal{I}_{-k}},  \{D_i\}_{i=1}^N \Bigg] = 0
    \end{align*}
    by the law of iterated expectations and independence of the observations.
    The last equality in Equation \eqref{eq:l2_B} follows from Assumption~\ref{assumption:variance} the fact that, under Assumptions \ref{assumption:propensity_score} and \ref{assumption:convergence_ml}, the calibrated propensity score is sup-norm consistent since:
    \begin{align*}
        \vert \widehat{p}^{\mathcal{I}_{-k}}_S(x) \widehat{\gamma} - p_S(x)\gamma  \vert &= \vert \widehat{p}^{\mathcal{I}_{-k}}_S(x) \widehat{\gamma} - p_S(x)\widehat{\gamma} + p_S(x)\widehat{\gamma} - p_S(x)\gamma \vert\\
        & \leq  \vert \widehat{p}^{\mathcal{I}_{-k}}_S(x) - p_S(x) \vert \cdot \vert \widehat{\gamma} \vert + \vert \widehat{\gamma} -\gamma \vert\cdot \vert p_S(x) \vert \\
        & \leq \vert \widehat{p}^{\mathcal{I}_{-k}}_S(x) - p_S(x) \vert + \vert \widehat{\gamma} -\gamma \vert
    \end{align*}
    and $\vert \widehat{\gamma} -\gamma \vert = o_p(N^{-1/2})$ as shown in Lemma \ref{lemma:convergence_share}. We conclude that (B) is $o_p(N^{-1/2})$.\par

    Lastly, we bound the term (C) by showing that its expectation converges to zero at rate $N^{-1/2}$:
    \begin{align}
        \begin{split}
            &\E\left[D_i \left(\left(\widehat\mu^{\mathcal{I}_{-k}}_1(X_i)  - \mu_1(X_i)\right) \left( \frac{1}{\widehat{p}^{\mathcal{I}_{-k}}(X_i)} - \frac{1}{p(X_i)}\right) \right) \right]\\
            & \leq \E\left[\left(\widehat\mu^{\mathcal{I}_{-k}}_1(X_i)  - \mu_1(X_i)\right) \left( \frac{1}{\widehat{p}^{\mathcal{I}_{-k}}(X_i)} - \frac{1}{p(X_i)}\right) \right]\\
            & \leq \E\left[\left\vert \left(\widehat\mu^{\mathcal{I}_{-k}}_1(X_i)  - \mu_1(X_i)\right) \left( \frac{1}{\widehat{p}^{\mathcal{I}_{-k}}(X_i)} - \frac{1}{p(X_i)}\right) \right\vert\right]\\
            &= \E\left[\left\vert \left(\widehat\mu^{\mathcal{I}_{-k}}_1(X_i)  - \mu_1(X_i)\right) \left(\frac{1}{\gamma p_S(X_i)} - \frac{1}{\widehat{\gamma}\widehat{p}_S^{\mathcal{I}_{-k}}(X_i)} -\frac{1}{\gamma} + \frac{1}{\widehat{\gamma}}\right) \right\vert\right]\\
            &\leq \E\left[\left\vert \widehat\mu^{\mathcal{I}_{-k}}_1(X_i)  - \mu_1(X_i)\right\vert \left\vert \left(\frac{1}{\gamma p_S(X_i)} - \frac{1}{\widehat{\gamma}\widehat{p}_S^{\mathcal{I}_{-k}}(X_i)}\right) + \left(\frac{1}{\widehat{\gamma}}-\frac{1}{\gamma}\right) \right\vert\right]\\
            &\leq \E\left[\left\vert \widehat\mu^{\mathcal{I}_{-k}}_1(X_i)  - \mu_1(X_i)\right\vert \left\vert \frac{1}{\gamma p_S(X_i)} - \frac{1}{\widehat{\gamma}\widehat{p}_S^{\mathcal{I}_{-k}}(X_i)}\right\vert\right] + \E\left[\left\vert \widehat\mu^{\mathcal{I}_{-k}}_1(X_i)  - \mu_1(X_i)\right\vert \left\vert \frac{1}{\widehat{\gamma}}-\frac{1}{\gamma}\right\vert\right]\\
            &\leq c_1 \E\left[\left\vert \widehat\mu^{\mathcal{I}_{-k}}_1(X_i)  - \mu_1(X_i)\right\vert \left\vert \widehat{\gamma}\widehat{p}_S^{\mathcal{I}_{-k}}(X_i)-\gamma p_S(X_i)\right\vert\right] + c_2 \E\left[\left\vert \widehat\mu^{\mathcal{I}_{-k}}_1(X_i)  - \mu_1(X_i)\right\vert \left\vert \gamma-\widehat{\gamma}\right\vert\right]\\
            &=c_1 \E\left[\left\vert \widehat\mu^{\mathcal{I}_{-k}}_1(X_i)  - \mu_1(X_i)\right\vert \left\vert \widehat{p}_S^{\mathcal{I}_{-k}}(X_i)- p_S(X_i)\right\vert \widehat{\gamma} \right] \\
            &\phantom{=} + c_1 \E\left[\left\vert \widehat\mu^{\mathcal{I}_{-k}}_1(X_i)  - \mu_1(X_i)\right\vert \left\vert \widehat{\gamma} - \gamma \right\vert p_S(X_i) \right]\\
            &\phantom{=} + c_2 \E\left[\left\vert \widehat\mu^{\mathcal{I}_{-k}}_1(X_i)  - \mu_1(X_i)\right\vert \left\vert \gamma-\widehat{\gamma}\right\vert\right]\\
            &\leq c_1 \E\left[\left\vert \widehat\mu^{\mathcal{I}_{-k}}_1(X_i)  - \mu_1(X_i)\right\vert \left\vert \widehat{p}_S^{\mathcal{I}_{-k}}(X_i)- p_S(X_i)\right\vert \right] \\
            &\phantom{=} + c_1 \E\left[\left\vert \widehat\mu^{\mathcal{I}_{-k}}_1(X_i)  - \mu_1(X_i)\right\vert \left\vert \widehat{\gamma} - \gamma \right\vert \right]\\
            &\phantom{=} + c_2 \E\left[\left\vert \widehat\mu^{\mathcal{I}_{-k}}_1(X_i)  - \mu_1(X_i)\right\vert \left\vert \gamma-\widehat{\gamma}\right\vert\right]\\
            &\leq c_1 \norm{\widehat\mu^{\mathcal{I}_{-k}}_1(X_i)  - \mu_1(X_i)}_2 \norm{\widehat{p}_S^{\mathcal{I}_{-k}}(X_i)- p_S(X_i)}_2 + (c_1 + c_2) \norm{\widehat\mu^{\mathcal{I}_{-k}}_1(X_i)  - \mu_1(X_i)}_2 \norm{\widehat{\gamma} - \gamma}_2\\
            &= o(N^{-1/2}) + o(N^{-1/2}) = o(N^{-1/2})  
        \end{split}
    \end{align}
    where the first term in the last step is $o(N^{-1/2})$ by Assumption \ref{assumption:convergence_ml}~\ref{assumption:convergence_ml_2} and the second term is $o(N^{-1/2})$ by Lemma \ref{lemma:convergence_share} and Assumption \ref{assumption:convergence_ml}~\ref{assumption:convergence_ml_1}. The positive constants $c_1$ and $c_2$ come from the boundness of $p_S(x)$ and $\gamma$ (and that of their estimators). From the law of large numbers, it follows that (C) is $o_p(N^{-1/2})$. 
\end{proof}

\newpage
\section{Appendix: Details on the simulation study}\label{app:additional_results}
\renewcommand{\theequation}{\thesection\arabic{equation}}
\setcounter{equation}{0}
\renewcommand{\thelemma}{\thesection\arabic{lemma}}
\setcounter{lemma}{0}
\renewcommand{\thetable}{\thesection\arabic{table}}
\setcounter{table}{0}

\subsection{Implementation details}\label{appsub:implementation}
The simulation study is implemented in \texttt{Python 3.12}. The data generating process is implemented using the \texttt{numpy 1.26.4} package. The machine learning models are estimated using the \texttt{scikit-learn 1.4.0} package, using the acceleration extension \texttt{scikit-learn-intelex 2024.4.0}. The empirical data is processed using \texttt{pandas 2.2.2}. The code for the simulation study is availabile at \url{https://github.com/dballinari/Calibrating-doubly-robust-estimators-with-unbalanced-treatment-assignment}.

\subsection{Additional simulation results}

\begin{table}[!h]
    \centering
    \begin{threeparttable}
        \caption{Statistics of the synthetic DGP with $\sigma=5$}
        \label{tab:simulation_results_sigma5}
        \scriptsize
        \begin{tabular}{rrrrr|rrrr|rrrr}
            \toprule
            & \multicolumn{4}{c}{$N=2000$, $\E[D_i]=2.5\%$, $\sigma=5$} & \multicolumn{4}{c}{$N=4000$, $\E[D_i]=2.5\%$, $\sigma=5$} & \multicolumn{4}{c}{$N=8000$, $\E[D_i]=2.5\%$, $\sigma=5$} \\
            & RMSE & Bias & Std. dev. & Coverage & RMSE & Bias & Std. dev. & Coverage & RMSE & Bias & Std. dev. & Coverage \\
           \midrule
           DML & - & - & - & - & 0.885 & 0.025 & 0.884 & 0.983 & 0.448 & -0.001 & 0.448 & 0.980 \\
           U-DML & - & - & - & - & 0.761 & 0.097 & 0.755 & 0.948 & 0.523 & 0.065 & 0.519 & 0.954 \\
           CU-DML & - & - & - & - & \textbf{0.588} & 0.056 & 0.585 & 0.944 & \textbf{0.402} & 0.019 & 0.401 & 0.963 \\
           W-DML & - & - & - & - & 0.675 & 0.053 & 0.673 & 0.974 & 0.429 & 0.005 & 0.428 & 0.979 \\
           N-DML & - & - & - & - & 0.698 & 0.046 & 0.696 & 0.932 & 0.420 & 0.008 & 0.420 & 0.948 \\
           T-DML & - & - & - & - & 0.671 & 0.055 & 0.669 & 0.934 & 0.420 & 0.008 & 0.420 & 0.948 \\
           \toprule
           & \multicolumn{4}{c}{$N=2000$, $\E[D_i]=5.0\%$, $\sigma=5$} & \multicolumn{4}{c}{$N=4000$, $\E[D_i]=5.0\%$, $\sigma=5$} & \multicolumn{4}{c}{$N=8000$, $\E[D_i]=5.0\%$, $\sigma=5$} \\
           & RMSE & Bias & Std. dev. & Coverage & RMSE & Bias & Std. dev. & Coverage & RMSE & Bias & Std. dev. & Coverage \\
          \midrule
          DML & 0.742 & 0.038 & 0.741 & 0.982 & 0.421 & 0.020 & 0.420 & 0.973 & 0.287 & 0.008 & 0.287 & 0.961 \\
          U-DML & 0.773 & 0.100 & 0.767 & 0.949 & 0.510 & 0.045 & 0.508 & 0.961 & 0.375 & 0.055 & 0.372 & 0.951 \\
          CU-DML & \textbf{0.600} & 0.056 & 0.597 & 0.950 & \textbf{0.395} & 0.025 & 0.395 & 0.958 & 0.284 & 0.017 & 0.283 & 0.951 \\
          W-DML & 0.727 & 0.034 & 0.726 & 0.982 & 0.420 & 0.021 & 0.420 & 0.973 & 0.287 & 0.008 & 0.287 & 0.961 \\
          N-DML & 0.656 & 0.054 & 0.654 & 0.934 & 0.402 & 0.028 & 0.401 & 0.950 & \textbf{0.283} & 0.012 & 0.283 & 0.947 \\
          T-DML & 0.651 & 0.045 & 0.649 & 0.936 & 0.402 & 0.028 & 0.401 & 0.950 & \textbf{0.283} & 0.012 & 0.283 & 0.947 \\
          \toprule
          & \multicolumn{4}{c}{$N=2000$, $\E[D_i]=10.0\%$, $\sigma=5$} & \multicolumn{4}{c}{$N=4000$, $\E[D_i]=10.0\%$, $\sigma=5$} & \multicolumn{4}{c}{$N=8000$, $\E[D_i]=10.0\%$, $\sigma=5$} \\
          & RMSE & Bias & Std. dev. & Coverage & RMSE & Bias & Std. dev. & Coverage & RMSE & Bias & Std. dev. & Coverage \\
         \midrule
         DML & 0.456 & 0.039 & 0.454 & 0.953 & 0.288 & 0.028 & 0.287 & 0.965 & 0.208 & 0.010 & 0.208 & 0.949 \\
         U-DML & 0.550 & 0.080 & 0.544 & 0.944 & 0.362 & 0.047 & 0.359 & 0.951 & 0.272 & 0.047 & 0.268 & 0.932 \\
         CU-DML & \textbf{0.438} & 0.045 & 0.436 & 0.942 & 0.292 & 0.029 & 0.290 & 0.958 & 0.210 & 0.015 & 0.209 & 0.936 \\
         W-DML & 0.456 & 0.039 & 0.454 & 0.953 & 0.288 & 0.028 & 0.287 & 0.965 & 0.208 & 0.010 & 0.208 & 0.949 \\
         N-DML & 0.446 & 0.043 & 0.444 & 0.924 & \textbf{0.286} & 0.031 & 0.284 & 0.949 & \textbf{0.207} & 0.012 & 0.207 & 0.939 \\
         T-DML & 0.446 & 0.043 & 0.444 & 0.924 & \textbf{0.286} & 0.031 & 0.284 & 0.949 & \textbf{0.207} & 0.012 & 0.207 & 0.939 \\
      \bottomrule
        \end{tabular}
        \begin{tablenotes}
            \small
            \item \textsc{Note}: The table reports the root mean squared error (RMSE), the absolute value of the average bias, the standard deviation, and the coverage of the 95\% confidence interval for the ATE estimators across 1'000 simulations. The data is generated from the synthetic DGP according to Equation \eqref{eq:synthetic_dgp}, with innovation standard deviation $\sigma=5$, which corresponds to a signal-to-noise ratio of $\Var[y]/\sigma^2 = 1.0$. The table reports the results for 8 different simulation settings, where the sample size $N$ and the share of treated $\E[D_i]$ are varied. The best performance in terms of RMSE is highlighted in bold.
        \end{tablenotes}
    \end{threeparttable}
\end{table}

\end{appendix}

\end{document}